\def\BState{\State\hskip-\ALG@thistlm}
\newtheorem{define}{Definition}
\newtheorem{theorem}{Theorem}
\newtheorem{lemma}{Lemma}
\newcommand{\R}{\mathbb R}
\newcommand{\N}{\mathcal{N}}
\newcommand{\D}{\mathcal{D}}
\newcommand{\V}{\mathcal{V}}
\newcommand{\E}{\mathcal{E}}
\newcommand{\Xr}[2]{\mathcal{X}^{#1}_{S_{#2}}}
\newcommand{\Le}{\mathcal{L}}
\newcommand{\A}{\mathcal{A}}
\newcommand{\J}{\mathcal{J}}
\newcommand{\eps}{\epsilon}
\newcommand{\mb}{\overline{m}}
\newcommand{\Mb}{\overline{M}}
\newcommand{\Z}{\mathbb{Z}}
\newcommand{\nrm}[1]{\left \lVert#1\right \rVert}
\DeclarePairedDelimiter{\ceil}{\lceil}{\rceil}
\DeclarePairedDelimiter{\floor}{\lfloor}{\rfloor}
\DeclarePairedDelimiter{\abs}{\lvert}{\rvert}
\newcommand{\rarr}{\rightarrow} % Abbreviation for \rightarrow
\let\oldceil\ceil
\def\ceil{\@ifstar{\oldceil}{\oldceil*}}
\let\oldfloor\floor
\def\floor{\@ifstar{\oldfloor}{\oldfloor*}}
\let\oldnorm\norm
\def\norm{\@ifstar{\oldnorm}{\oldnorm*}}
\let\oldabs\abs
\def\abs{\@ifstar{\oldabs}{\oldabs*}}
\begin{document}

\title{\LARGE \bf Resilient Leader-Follower Consensus to Arbitrary Reference Values}

\author{James Usevitch and Dimitra Panagou
\thanks{James Usevitch is with the Department of Aerospace Engineering, University of Michigan, Ann Arbor; \texttt{usevitch@umich.edu}.
Dimitra Panagou is with the Department of Aerospace Engineering, University of Michigan, Ann Arbor; \texttt{dpanagou@umich.edu}.
The authors would like to acknowledge the support of the Automotive Research Center (ARC) in accordance with Cooperative Agreement W56HZV-14-2-0001 U.S. Army TARDEC in Warren, MI. This work has been funded by the Center for Unmanned Aircraft Systems (C-UAS), a National Science Foundation Industry/University Cooperative Research Center (I/UCRC) under NSF Award No. 1738714 along with significant contributions from C-UAS industry members. }
}

\maketitle
\thispagestyle{empty}
\pagestyle{empty}

\acrodef{wrt}[w.r.t.]{with respect to}
\acrodef{apf}[APF]{Artificial Potential Fields}
\begin{abstract}
The problem of consensus in the presence of misbehaving agents has increasingly attracted attention in the literature. Prior results have established algorithms and graph structures for multi-agent networks which guarantee the consensus of normally behaving agents in the presence of a bounded number of misbehaving agents. The final consensus value is guaranteed to fall within the convex hull of initial agent states. However, the problem of consensus tracking considers consensus to arbitrary reference values which may not lie within such bounds. Conditions for consensus tracking in the presence of misbehaving agents has not been fully studied. This paper presents conditions for a network of agents using the W-MSR algorithm to achieve this objective.
\end{abstract}

\IEEEpeerreviewmaketitle

\section{Introduction}

There has been an increasing amount of recent work on resilient consensus-based algorithms. Several authors have proposed algorithms that employ local filtering mechanisms to guarantee that normal agents are able to come to agreement in the presence of a bounded number of malicious or misbehaving agents. These algorithms include the ARC-P, W-MSR, SW-MSR, and DP-MSR algorithms (\cite{Leblanc2013a,LeBlanc_2013_Res,Salda2017,Dibaji2017resilient}). These algorithms are able to mitigate the effects of malicious and misbehaving agents without the need for normal agents to explicitly identify the sources of misbehavior. They guarantee that the consensus value of normally behaving agents is within the convex hull of the initial agent state values under the assumption of $r$-robustness or $(r,s)$-robustness of the network. On the other hand, consensus-based algorithms without resilience to malicious agents have been employed in the literature to solve the problem of coming to consensus to reference values, where agents seek to come to agreement with a reference state which may or may not lie within the convex hull of initial state values (see \cite{Ren2008book,Moore2007,Mesbahi2010} and references). Solutions to this problem have been studied in the literature under the assumption that agents are behaving nominally (e.g. \cite{Ren2007a,Moore2007}). 

The problem of consensus to reference values in $r$-robust networks has not been as thoroughly treated. The most relevant work is that in \cite{LeBlanc2013Sync,LeBlanc2017,leblanc2014,Mitra2016secure,mitra2018}. In \cite{LeBlanc2013Sync} and \cite{LeBlanc2017}, a resilient protocol and dynamic feedback laws are given to synchronize trajectories of continuous-time agents to the same stable zero-input solution of a time-invariant system in the presence of adversaries. This work only considers continuous-time LTI systems however, with the assumption that the time-varying reference state satisfies the dynamics $\dot{x}_r = Ax_r$ for some $A$. In \cite{leblanc2014}, the problem of resilient distributed estimation is considered where agents employ a discrete-time resilient consensus algorithm similar to the W-MSR algorithm to reduce the estimation error of individual parameters of interest. The authors assume that certain nodes have a precise knowledge of their own parameters. These "reliable nodes" drive the errors of the rest of the normal nodes to the reference value of zero in the presence of misbehaving agents. The main result of this paper is that consensus of the normal agents' error values to zero occurs if for each normal node $i$ there exists an infinite sequence of time periods (with bounded, finite time between each time period) where $i$ interacts with at least $F+1$ reliable nodes. A limitation of these results is the interaction requirement between each normal node and the minimum required number of reliable nodes, which becomes increasingly difficult to satisfy as the network size increases. The problem of consensus with a static reference state also somewhat resembles the resilient broadcast problem (\cite{Zhang2012b,Pagourtzis2017,Pelc2005,Koo2004a}) in which a source node seeks to transmit a message to all other nodes in the presence of malicious or misbehaving nodes. However, the state update algorithms differ from those commonly applied in $r$-robust networks. Finally, in \cite{Mitra2016secure,mitra2018} the authors consider resilient distributed estimation of a system of the form $x[t+1] = Ax[t]$ where each individual agent may only be able to observe part of the system modes. A \emph{Mode Estimation Directed Acyclic Graph} (MEDAG) is constructed for each mode $\lambda_j$ of $A$. Each MEDAG consists of a set of at least $(2F+1)$ source nodes which can observe $\lambda_j$, and successive subsets (called \emph{levels}) of the remaining nodes with sufficient in-neighbors to guarantee information flow from the source nodes. If there exists a MEDAG for each unobservable mode and if agents can explicitly identify in-neighbors from preceding levels, a resilient consensus algorithm can be used to bring agents' estimates of the system state into agreement with the actual state. A graph condition called \emph{strong $r$-robustness} (\cite{Zhang2012robustness}) is a sufficient condition for a graph to contain a MEDAG; specifically, the network must be strongly $(2F+1)$-robust with respect to a set of source nodes. However, the algorithm given in \cite{Mitra2016secure} for agents to identify in-neighbors from preceding levels requires the graph to initially be strongly $(3F+1)$-robust. %Further, to the best of our knowledge no work has been done on determining specific classes of graphs that are strongly $r$-robust.

% Finally, in \cite{Mitra2016secure}, the problem of fully distributed, resilient estimation in the presence of misbehaving nodes is treated. Their method is based upon the graph theoretical notions of \emph{strong $r$-robustness} (which was first introduced in \cite{Zhang2012robustness}) and \emph{Mode Estimation Directed Acyclic Graphs} (MEDAGs). They show that the presence of MEDAGs in a graph is a sufficient condition to ensure that information about the decoupled modes of the system is resiliently transmitted from a group of source nodes to other nodes which cannot observe those modes. In addition, they demonstrate that strong $r$-robustness of a network is a sufficient condition for the creation of MEDAGs. An algorithm is given for constructing MEDAGs in the presence of adversaries which requires the graph to initially be strongly $(3F+1)$-robust. These results are extended in \cite{mitra2018} to hold for time-varying graphs with random communication delays. Since consensus to a leader or "reference" state requires accurate estimation of that state, these papers enable such information to be reliably obtained by normally behaving agents in a network even in the presence of worst-case behaving adversaries. 

The contributions of this paper differ from prior literature in the following ways: first, we consider consensus tracking with a reference signal which is a piecewise continuous step function. We show that under the W-MSR algorithm, consensus tracking is guaranteed if the graph is strongly $(2F+1)$-robust without the need for agents to identify in-neighbors from preceding levels in a MEDAG. Since this identification is not required, the graph does not need to initially be $(3F+1)$-robust. Second, we incorporate the notion of trusted nodes (or trusted agents) into resilient consensus tracking \cite{Abbas2014}. Trusted agents are agents which have been sufficiently secured such that it can be safely assumed that they will always behave. Having trusted agents as leaders allows the condition of strong $r$-robustness to be slightly relaxed while still guaranteeing consensus to the reference value. Finally, there is very little prior work on proving that any class of graph is strongly $r$-robust. We demonstrate that certain graphs called $k$-circulant graphs are strongly $r$-robust when a proper subset of agents are chosen to behave as leaders.

% The contributions of this paper differ from prior literature in the following ways: first, we present novel graph properties as sufficient conditions to guarantee consensus to arbitrary reference values in a network applying the W-MSR algorithm. Unlike \cite{leblanc2014}, these conditions allow consensus of the normal nodes to the reference value even if not every normal node in-neighbor set contains leaders. In addition, the agent model is different from models discussed in the resilient broadcast literature. Second, we show that certain classes of circulant undirected and directed graphs demonstrate these novel graph theoretic properties.

Our paper is organized as follows: Section 2 contains our notation and problem formulation. In Section 3 we present necessary conditions for consensus tracking in robust networks and the insufficiency of $r$-robustness and $(r,s)$-robustness to guarantee consensus tracking to arbitrary values. Section 4 contains our main results on consensus to a reference state in the presence of adversaries. In Section 5 we demonstrate the strong $r$-robustness of $k$-circulant graphs.  Section 6 contains simulations demonstrating our work, and in section 7 we give a brief conclusion.

\label{intro}

\section{Notation and Problem Formulation}

We denote a graph as $\mathcal{G} = (\mathcal{V},\E)$ and a digraph as $\mathcal{D} = (\mathcal{V},\E)$, with $\V = \{1,...,n\}$ denoting the vertex set, or agent set, of the graph and $\E$ denoting the edge set of the graph. A directed edge ${(i,j) \in \E}: i,j \in \V$ denotes that there exists a connection from agent $i$ to agent $j$. Agent $j$ is able to receive information from agent $i$ if $(i,j)$ is in $\E$ (note that $(i,j) \neq (j,i)$). This implies that agent $i$ is an \emph{in-neighbor} of $j$ and agent $j$ is an \emph{out-neighbor} of $i$. We denote the set of in-neighbors of agent $i$ as $\V_i = \{j: (j,i) \in \E)\}$ and the set of inclusive neighbors  of $i$ as $\J_i = (\V_i \cup \{i\})$ \cite{LeBlanc_2013_Res}. We denote the cardinality of a set $S$ as $|S|$, the set of integers as $\mathbb{Z}$, the set of integers greater than or equal to 0 as $\mathbb{Z}_{\geq 0}$. and the natural numbers as $\mathbb{N}$.

An undirected graph of $n$ agents is called circulant if there exists a set $\{a_1, a_2, \ldots, a_l \in \mathbb{Z}_{\geq 0}: a_1 < a_2 < \ldots < a_l < n\}$ such that $(i, \left[i \pm a_1 \right] \text{mod}\, n) \in \E, \ldots, (i,\left[i \pm a_l \right] \text{mod}\, n) \in \E$ \cite{Boesch1984}. We call such a graph an \emph{undirected circulant graph} and denote it as $C_n(\pm a_1, \pm a_2, \ldots, \pm a_m) = (\V, \E)$. These graphs are constructed over the additive group of integers modulo $n$ (the agents $n+a$ and $a$ are congruent modulo $n$). Similarly, we call a digraph of $n$ agents \emph{circulant} if there exists a set $\{a_1, a_2, \ldots, a_m: 0 < a_1 < a_2 < \ldots < a_m < n\},\ m \in \mathbb{Z}_{\geq 0}$ such that $(i, \left[i+a_1 \right] \text{mod}\, n) \in \mathcal{E}, \ldots, (i,\left[i+a_m \right] \text{mod}\, n) \in \mathcal{E}$. We denote such a graph as $C_n(a_1, a_2, \ldots, a_m) = (\V, \E)$ and call it a \emph{directed circulant graph} or \emph{circulant digraph}.

In this paper, we will consider a specific class of circulant digraphs called \emph{k-circulant digraphs}:

\begin{define}
Let $n \in \mathbb{Z},\ n \geq 2$ and let $k \in \mathbb{Z}: 1 \leq k \leq n-1$. A $k$-circulant digraph  is any circulant digraph of the form $C_n(1,2,3,\ldots,k) = (\V, \E)$.
\end{define}

This type of graph is fully determined by the number of agents $n$ and by the parameter $k$, which determines the in- and out-neighbors of each agent.

Finally, the notions of reachability, \emph{r}-robustness, and $(r,s)$-robustness were defined by the authors of \cite{LeBlanc_2013_Res}. We do not repeat the definitions here due to lack of space. Although the definitions refer specifically to digraphs, they also apply to undirected graphs.\footnote{Undirected graphs can be modeled as digraphs in which $(i,j) \in \E \iff (j,i) \in \E$}

\subsection{Problem Definition}

The prior literature has mainly dealt with \emph{resilient asymptotic consensus} (\cite{LeBlanc_2013_Res,Zhang2015a}), where agents remain within the convex hull of their initial values and come to consensus on a common value. However, there may exist cases where it is desired for agents' states to reach consensus on some reference value that may be outside the convex set of initial states. To address this problem, we assume there exists a potentially time-varying reference state $x_r[t]$ that satisfies $x_r[t+1] = f_r(t,x_r[t])$. We assume that $f_r(t,x_r[t])$ is a piecewise-continuous step function where for each point of discontinuity $t_d$, $f_r(t_d,x_r[t_d]) = \lim_{t \rarr t_d^+} f_r(t,x_r[t])$. The objective of consensus tracking is
\begin{equation}
% \label{eq:obj}
\lim_{t \rightarrow \infty} \nrm{x_i[t] - x_r[t]} = 0,\ \forall i \in \V
\end{equation}

We consider the case where the reference state is constant after some finite time $T$, i.e. for some $C_L \in \R$,
\begin{equation}
f_r(t,x_r[t]) = C_L\ \forall t \geq T
\end{equation}

In this paper, we consider the problem of sufficient graph theoretical conditions for consensus tracking among normally behaving agents to occur when these agents apply the W-MSR algorithm in the presence of misbehaving agents, which will be defined later in this section. As we will show in section \ref{sec:insuff}, the graph theoretic concepts of $r$- and $(r,s)$-robustness are insufficient to guarantee consensus tracking under these conditions.

Three types of agents are considered: normal agents, agents behaving as leaders, and misbehaving or adversarial agents. Normal agents update their state by applying the W-MSR algorithm, which was defined in \cite{LeBlanc_2013_Res} and is summarized below:

\begin{enumerate}
\item Each agent $i$ receives state values from its inclusive set of in-neighbors $\J_i$ and forms a sorted list.
\item Each agent $i$'s own state value is used as a reference to filter out outliers in the sorted list. If there are $F$ or less values greater than (or less than) its state value, each agent removes all states greater than (less than) its own from the list. If there are $F$ or more states greater than (or less than) its own state, it removes the highest (lowest) $F$ states.
\item Using the remaining values, denoted as $\mathcal{R}_i[t]$, each agent $i$ updates its state according to a function of the remaining states:
\begin{equation}
\label{eq:wmsr}
x_i[t+1] = \sum_{j \in \mathcal{R}_i[t]} w_{ij}[t]x_j[t],\ \forall i \in \N
\end{equation}
\end{enumerate}

The set of normal agents is denoted $\N$. It is assumed that the following conditions hold for the weights $w_{ij}[t]$ for all $i \in \N$ and for all $t \in \mathbb{Z}_{\geq 0}$:

\begin{itemize}
\item $w_{ij}[t] = 0$ when $j \notin \mathcal{J}_i$
\item $w_{ij}[t] \geq \alpha,\ 0 < \alpha < 1,\ \forall j \in \mathcal{J}_i$
\item $\sum_{j=1}^n w_{ij}[t] = 1$
\end{itemize}
Next, we define a agent behaving as a leader as follows:
\begin{define}
An agent behaves as a leader if it updates its state value according to 
\begin{equation}
\label{eq:leader}
x[t+1] = x_r[t+1] = f_r(t,x_r[t])\, \forall t \geq t_0
\end{equation}
\end{define}

We denote the set of such agents as $\Le \subset \V$. By a slight abuse of language, we will sometimes refer to agents behaving as leaders simply as ``leaders" in this paper. Such agents are not explicitly recognized by other agents in the network; their communicated state values are filtered and weighted in the same way as other agents' states in state update equations.

Finally, we define misbehaving agents as follows:
\begin{define}
An agent $j \in \V$ is misbehaving if both of the following conditions are satisfied:
\begin{enumerate}
\item There exists a time $t$ where agent $j$ does not update its state according to equation (\ref{eq:leader}) 
\item There exists a time $t$ where agent $j$ does not update its state according to \eqref{eq:wmsr} and/or sends different state values to different out-neighbors at any time $t$.
\end{enumerate}
\end{define}
Misbehaving agents therefore include agents that are Byzantine and malicious, as defined in \cite{LeBlanc_2013_Res,Zhang2012c}. Briefly, the key difference between the two is that malicious agents send the same state value to all out-neighbors while Byzantine agents may send different values to different out-neighbors. We denote the set of misbehaving agents as $\A$. We assume that $\N \cup \Le \cup \A = \V$. The threat model assumed in this paper is that $\A$ is an \emph{$F$-local set}, meaning that for any $i \in \N$, $|\J_i \cap \A| \leq F$ \cite{Zhang2012robustness}.

The following Lemmas will be needed in our analysis of conditions guaranteeing resilient consensus tracking:

%*****************************************************************

\begin{lemma}
Consider the reference tracking consensus problem. A necessary condition for the normal agents reaching consensus to the group reference at any arbitrary value under the W-MSR protocol is the presence of at least $F+1$ agents behaving as leaders.
\end{lemma} 

\begin{proof}
Suppose $|\Le| \leq F$, and all other normal and misbehaving agent states are at a common value $a \in \R$. Suppose $x_r[t_1] = a_1, a_1 \neq a$ for some $a_1 \in \R$ and $t_1$. If the malicious agents keep their state values at $a$, then $\forall i \in \N,\ |\mathcal{R}_i[t] \cap \Le| = 0\ \forall t \geq t_1$ by definition of the W-MSR algorithm. Thus all normal agents will remain at $a$.
\end{proof}

\begin{lemma}
\label{lem:constL}
Under the W-MSR protocol, any agent $i$ connected to at least $F+1$ agents behaving as leaders with state values equal to a value $C_L$ will receive input from those leaders at all times. %(Word this better)
\end{lemma}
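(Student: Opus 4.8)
The plan is to argue by a direct pigeonhole count on the W-MSR filtering step, exploiting the fact that the $F+1$ leader values are all identical. I would fix an arbitrary time $t$. Since the $F+1$ leaders are in-neighbors of agent $i$, each lies in $\J_i$, so each of their state values---all equal to the single number $C_L$---appears as an entry in $i$'s sorted list at time $t$. The central observation I would invoke is the removal bound built into the algorithm: the filtering step discards at most $F$ entries that are strictly greater than $x_i[t]$ and at most $F$ entries that are strictly less than $x_i[t]$, and never discards an entry equal to $x_i[t]$.

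Next I would note that, because every leader entry carries the same value $C_L$, all $F+1$ of them sit on a single side of $i$'s own value: either all strictly greater than $x_i[t]$, all strictly less than $x_i[t]$, or all equal to $x_i[t]$. I would then dispatch three cases. If $C_L = x_i[t]$, none of the leader entries is removable, so all $F+1$ survive. If $C_L > x_i[t]$, all $F+1$ leader entries lie in the upper group, from which at most $F$ entries are removed, so at least one leader entry remains. The case $C_L < x_i[t]$ is symmetric. In every case at least one entry of value $C_L$ contributed by a leader survives into $\mathcal{R}_i[t]$, which is exactly the claim that $i$ receives leader input. Since $t$ was arbitrary and the leaders hold $C_L$ at every step, the conclusion holds at all times.

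The only point that needs care---and the closest thing to an obstacle---is the bookkeeping in the count: the filter removes a fixed number of list \emph{entries} (the $F$ highest or $F$ lowest), not distinct values, so I must track the $F+1$ leader contributions as $F+1$ separate entries counted with multiplicity rather than as one repeated value. Once that is done, the inequality $(F+1) - F \geq 1$ closes the upper and lower cases immediately, and the equality case follows from the fact that the filter acts only on strict extremes. No continuity or robustness hypotheses are needed here; the statement is purely a consequence of the $F$-removal rule of W-MSR together with the $F$-local adversary model, which guarantees the leaders are genuine in-neighbors whose values actually enter $i$'s list.
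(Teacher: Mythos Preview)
Your proposal is correct and follows essentially the same three-case argument as the paper: split on whether $C_L$ is greater than, less than, or equal to $x_i[t]$, and in the strict cases use the fact that at most $F$ of the $F+1$ identical leader entries can be trimmed. Your write-up is more explicit about the multiplicity bookkeeping than the paper's terse version, but the underlying idea is identical; the only superfluous remark is the appeal to the $F$-local adversary model at the end, which is not needed since the lemma already hypothesizes that the $F+1$ agents are behaving as leaders.
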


\begin{proof}
Suppose an agent $i$ is connected to all agents in a set of leaders $S_L$, $|S_L| \geq F+1$ whose state value is $C_L$. If $C_L$ is greater or less than $x_i[t]$, agent $i$ will filter out at most $F$ values from $S_L$. If $C_L = x_i[t]$, none of the states from $S_L$ will be filtered out.
\end{proof}

\subsection{Insufficiency of $r$- and $(r,s)$-Robustness for Consensus to Reference Values} \label{sec:insuff}

In \cite{LeBlanc_2013_Res} it was shown that under an $F$-total  malicious adversary model, a necessary and sufficient condition for resilient asymptotic consensus in a network applying the W-MSR protocol is $(F+1,F+1)$-robustness of the network. Similarly, under an $F$-local malicious or $F$-total Byzantine adversary model, a sufficient condition for resilient asymptotic consensus is ($2F+1$)-robustness of the network. In order to achieve resilient consensus tracking, one might be tempted to take a ($2F+1$)- or $(F+1,F+1)$-robust network and designate any $F+1$ arbitrary agents to behave as leaders. We demonstrate that this is not sufficient to guarantee that the normal agents will achieve consensus to the leader agents.

Consider an $(F+1,F+1)$-robust network partitioned into two disjoint sets $S_1$ and $S_2$, $S_1 \cup S_2 = \V$. Let $a_1,a_2 \in \R,\ a_1 \neq a_2$ be two arbitrary values. Define $\mathcal{X}_{S_m}^r = \{i \in S_m: | \V_i \setminus S_m| \geq r\}$ for $m \in \{1,2\}$. Suppose the set $S_1$ consists of $F+1$ agents with initial value of $a_1$, all of which have $F+1$ in-neighbors outside of $S_1$ ($|\Xr{F+1}{1}| = |S_1|$). Suppose the set $S_2$ consists of all other agents in the network, all with initial value $a_2$ and none of which have more than $F$ connections outside of $S_2$ ($|\Xr{F+1}{2}| = 0$). Let all agents in $S_1$ behave as leaders with $x_r[t] = a_1$. This graph meets the criteria for being $(F+1,F+1)$-robust, but the normal agents will not achieve consensus to the leaders since none have at least $F+1$ in-neighbors outside of $S_2$. 

%In addition, consider any graph where there exist two nonempty, disjoint sets $S_1$ and $S_2$ where $|S_1| \geq F+1$ and $|S_2| \geq  F+1$, with the nodes in each set having initial values of $a_1$ and $a_2$ respectively. Suppose that $|\Xr{F+1}{1}| + |\Xr{F+1}{2}| \geq F+1$ (meeting the criteria for an $(F+1,F+1)$-robust graph), $|\Xr{F+1}{2}| \leq F$, and at least $F+1$ nodes from set $S_1$ behave as leaders with  $x_r[t] = a_1$. If all of the ($F$ or fewer) nodes in $|\Xr{F+1}{2}|$ become malicious and hold their values at $a_2$, the normal nodes will never achieve consensus with the leaders because the normal nodes in $S_2$ will stay at $a_2$ for all time.

Next, consider a ($2F+1$)-robust network with a set $S_1$ of $F+1$ agents with state values at $a_1$. Let $S_2$ contain all other agents in the graph with initial state values of $a_2$. Since the graph is ($2F+1$)-robust, there necessarily exists at least one agent in $S_1$ that has $(2F+1)$ connections in $S_2$. Suppose that $F$ or less agents in $S_2$ are connected to all agents in $S_1$. This graph satisfies the conditions for $(2F+1)$-robustness. But, if all agents in $S_1$ behave as leaders with $x_r = a_1$, and if all $F$ or less agents in $S_2$ connected to all the leaders in $S_1$ become malicious and hold their state values at $a_2$, the normal agents will never achieve consensus to the leaders. This demonstrates that simply designating $F+1$ arbitrary agents in a $(2F+1)$- or $(F+1,F+1)$-robust network to act as the group reference will not guarantee that all normal agents will converge to $x_r[t]$. 

\section{Consensus to a Reference Value in Robust Networks}

\subsection{Consensus Tracking Without Trusted Agents}

It is clear that different graph conditions are needed to guarantee consensus to the group reference under the W-MSR algorithm. We will demonstrate that the property of \emph{strong $r$-robustness} is sufficient for this to be achieved.

\begin{define}
	\cite{mitra2018} Let $r \in \Z_{\geq 0}$ and $S \subset \V$. A graph is strongly $r$-robust with respect to $S$ if for all nonempty subsets $C \subseteq \V \backslash S$, $C$ is $r$-reachable.
\end{define}

% Define the functions $m[t]$ and $M[t]$ as $m[t] = \min_{i \in \N} x_i[t] $ and $M[t] = \max_{i \in \N} x_i[t]$.

% \begin{lemma}
% \label{mono}
% Consider a graph $G = (\V,\E)$ which satisfies an $F$-local adversary model with $|\Le \cap \A| = 0$. Suppose $G$ is TNLF robust with parameter $F$, and the normal agents apply the W-MSR algorithm. Then on any time interval $[t_1,t_2]$ where $x_r[t] \geq M[t]\ \forall t \in [t_1,t_2]$, ${x_i[t+1]} \in [m[t],x_r[t]]\ \forall i \in \N$ and $m[t]$ is a nondecreasing function $\forall t \in [t_1,t_2]$. Similarly, on any time interval $[t_1,t_2]$ where $x_r[t] \leq m[t]\ \forall t \in [t_1,t_2]$, ${x_i[t+1]} \in [x_r[t],M[t]]\ \forall i \in \N$ and $M[t]$ is a nonincreasing function $\forall t \in [t_1,t_2]$.
% \end{lemma}

% \begin{proof}
% First consider when $x_r[t] \geq M[t]$ on a time interval $[t_1,t_2]$. Observe that if any of the $F$ misbehaving nodes' states are less than $m[t]$, their state values will be filtered out by all normal nodes in the network. The same holds for when their values are all greater than $x_r[t]$. By the definition of the W-MSR algorithm and TNLF robustness, all normal nodes will update their states with a convex combination of values in $[m[t],x_r[t]]$. A similar argument holds true when $x_r[t] \leq m[t]$ on a time interval $[t_1,t_2]$.
% \end{proof}

\begin{lemma}
\label{lem:mbarmono}
Consider a graph $G = (\V,\E)$ which satisfies an $F$-local adversary model with $|\Le \cap \A| \geq 0$. Suppose $G$ is strongly $(2F+1)$ robust with respect to $\Le$ and the normal agents apply the W-MSR algorithm. Define $\overline{m}[t] = \min_{i \in \N} (x_i[t],x_r[t])$ and $\overline{M}[t] = \max_{i \in \N} (x_i[t],x_r[t])$. Then on any time interval $[t_1,t_2)$ where $x_r[t]$ is constant, the following hold $\forall t \in [t_1,t_2)$:
\begin{itemize}
\item $x_i[t] \in [\mb[t_1],\Mb[t_1]]\ \forall i \in \V \backslash \A$
\item $\mb[t]$ and $\Mb[t]$ are nondecreasing and nonincreasing, respectively
\end{itemize}
\end{lemma}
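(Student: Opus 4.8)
The plan is to prove both bullets by a single one-step containment argument that uses only the convex-combination structure of the W-MSR update together with the $F$-local property of $\A$; notably, the strong $(2F+1)$-robustness hypothesis is not needed for this invariance result and is instead reserved for the later convergence analysis. The organizing observation is that $\mb[\tau]$ and $\Mb[\tau]$ are, by definition, the minimum and maximum among the normal agents' states and the reference $x_r[\tau]$, and since every leader's state equals $x_r[\tau]$, these quantities are exactly the extremes over all non-adversarial agents $\V \backslash \A$ at time $\tau$. Thus the real content to establish is: for each $\tau \in [t_1,t_2)$, every non-adversarial state at time $\tau+1$ again lies in $[\mb[\tau],\Mb[\tau]]$. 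Monotonicity and the first bullet will then follow immediately.

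First I would dispatch the leaders. By the definition of an agent behaving as a leader, each such $i \in \Le \backslash \A$ satisfies $x_i[t] = x_r[t] = x_r[t_1]$ throughout the interval, and $x_r[t_1] \in [\mb[t_1],\Mb[t_1]]$ by construction; hence leaders remain in the interval and contribute nothing extremal.

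The crux is the update of a normal agent $i \in \N$, where I would show $\mathcal{R}_i[\tau] \subseteq [\mb[\tau],\Mb[\tau]]$. Suppose for contradiction that some retained $j$ has $x_j[\tau] > \Mb[\tau]$. Since $x_i[\tau] \le \Mb[\tau] < x_j[\tau]$ by the definition of $\Mb[\tau]$, this value lies strictly above agent $i$'s own state, so it belongs to the ``greater than'' group that the filter acts on. If at most $F$ values exceed $x_i[\tau]$, the rule discards all of them and $j$ cannot survive. If at least $F$ do, the rule discards the top $F$; for $j$ to survive there must be at least $F$ discarded values ranked above it, each exceeding $\Mb[\tau]$. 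Because every non-adversarial agent has state $\le \Mb[\tau]$, those $F$ agents together with $j$ must all lie in $\A$, forcing $|\J_i \cap \A| \ge F+1$ and contradicting the $F$-local assumption. The symmetric argument excludes any retained value below $\mb[\tau]$. Since agent $i$ always retains its own in-range state, $\mathcal{R}_i[\tau]$ is nonempty, and as $x_i[\tau+1]$ is a convex combination (nonnegative weights summing to one) of states all lying in $[\mb[\tau],\Mb[\tau]]$, we get $x_i[\tau+1] \in [\mb[\tau],\Mb[\tau]]$.

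Assembling the cases yields the one-step containment: all non-adversarial states and $x_r[\tau+1] = x_r[\tau]$ lie in $[\mb[\tau],\Mb[\tau]]$. Taking the extremes over $\V \backslash \A$ at time $\tau+1$ then gives $\mb[\tau+1] \ge \mb[\tau]$ and $\Mb[\tau+1] \le \Mb[\tau]$, which is the second bullet; iterating from $t_1$ gives $[\mb[t],\Mb[t]] \subseteq [\mb[t_1],\Mb[t_1]]$ for all $t \in [t_1,t_2)$, and since $x_i[t] \in [\mb[t],\Mb[t]]$ for every $i \in \V \backslash \A$ by definition, the first bullet follows. The main obstacle is getting the filtering case analysis exactly right: one must handle both the regime where fewer than $F$ values exceed $x_i[\tau]$ (all discarded) and the regime where at least $F$ do (only the top $F$ discarded), and confirm in each that a surviving out-of-range value would force $F+1$ adversaries into $\J_i$.
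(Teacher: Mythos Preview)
Your proposal is correct and follows essentially the same approach as the paper: both arguments use only the $F$-local assumption and the convex-combination structure of the W-MSR update to show that any value retained in $\mathcal{R}_i[\tau]$ must lie in $[\mb[\tau],\Mb[\tau]]$, from which monotonicity and invariance follow. Your treatment is in fact more explicit than the paper's---you separately handle leaders, distinguish the two filtering regimes, and correctly observe that the strong $(2F+1)$-robustness hypothesis plays no role here---whereas the paper gives the same argument more tersely by noting directly that any adversarial value above $\Mb[t]$ is among the top $F$ and hence filtered.
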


\begin{proof}
Consider an agent $i \in \N$. At any time $t \in [t_1,t_2)$, agent $i$ will receive at most $F$ values from $\J_i \cap \A$, denoted $\{x_{a_1}[t],\ldots,x_{a_j}[t]\},\ j \leq F$, by definition of an $F$-local adversary model. For any such $j$, observe that $x_{a_j}[t] > \Mb[t]$ implies $x_{a_j}[t] > x_k[t],\ \forall k \in \J_i \backslash \A$. This implies that $x_{a_j}[t]$ will be one of the highest values in $i$'s sorted list of state values, and therefore will not be in $\mathcal{R}_i[t]$. A similar method can be used to show $x_{a_j} < \mb[t]$ implies $x_{a_j}[t] \notin \mathcal{R}_i[t]$. Therefore by definition of the W-MSR algorithm, $x_i[t]$ will update its state with a convex combination of values in $[\mb[t],\Mb[t]]$, implying $\mb[t+1] \geq \mb[t],\ \Mb[t+1] \leq \Mb[t]\ \forall t \in [t_1,t_2)$ (since $x_r[t]$ is constant on $[t_1,t_2)$). Therefore $x_i[t] \in [\mb[t_1],\Mb[t_1]]$ and $\mb[t]$ and $\Mb[t]$ are nondecreasing and nonincreasing, respectively, for $t \in [t_1,t_2)$.

% If $m[t] \leq x_r[t] \leq M[t]$ on the time interval $[t_1,t_2]$, by nature of the W-MSR algorithm all nodes will update their states with a convex combination of values in $[\mb[t],\Mb[t]]$, proving the result in this case. If $x_r[t] = C_L,\ C_L \in \R$ on the time interval $[t_1,t_2]$, then either $x_r[t] \in [m[t],M[t]]$ or $x_r[t] \notin [m[t],M[t]]$. We have already proven the result for $x_r[t] \in [m[t],M[t]]$, and for the other case Lemma \ref{mono} demonstrates that both $\mb[t]$ and $\Mb[t]$ will be nondecreasing and nonincreasing, respectively, on $[t_1,t_2]$.
\end{proof}

 Lemma \ref{lem:mbarmono} gives a safety condition in the sense that when $x_r[t]$ is constant $\forall t \in [t_1,t_2)$, $[\mb[t_1],\Mb[t_1]]$ is an invariant set on the same time interval.

%
%
%
%*********************************************8
% Theorem
%***********************************************
%
%
%

\begin{theorem}
\label{constLeaderInConv}
Consider a graph of $n$ agents operating under the W-MSR algorithm that satisfies an $F$-local adversary model with $|\Le \cap \A| \geq 0$. Suppose further that there exists a time $t_C$ and a constant $C_L \in \R$ such that $x_r[t]=C_L\ \forall t \geq t_C$. Then $\lim_{t \rightarrow \infty} \|x_i[t] - x_r[t]\| = 0\ \forall i \in \N$ if the network is strongly $(2F+1)$-robust with respect to $\Le$.
\end{theorem}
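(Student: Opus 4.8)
The plan is to combine the monotone convergence from Lemma~\ref{lem:mbarmono} with strong $(2F+1)$-robustness to squeeze every normal trajectory onto the reference. First I would restrict attention to $t \geq t_C$, on which $x_r[t]=C_L$ is constant. By Lemma~\ref{lem:mbarmono}, $\Mb[t]$ is nonincreasing and $\mb[t]$ is nondecreasing on $[t_C,\infty)$, and both are bounded, so they converge to limits $\Mb^*$ and $\mb^*$. Because $x_r[t]=C_L$ is included in both extrema, $\mb[t] \leq C_L \leq \Mb[t]$ for every $t \geq t_C$, hence $\mb^* \leq C_L \leq \Mb^*$. Since $x_i[t]\in[\mb[t],\Mb[t]]$ for every normal agent, it suffices to prove $\Mb^* = C_L$ and $\mb^* = C_L$; a sandwich argument then yields $\nrm{x_i[t]-x_r[t]}\to 0$ for all $i \in \N$. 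By symmetry I would argue only $\Mb^* = C_L$.

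Next I would argue by contradiction, assuming $\Mb^* > C_L$. Fix $\eps>0$ with $\Mb^*-\eps > C_L$ and, for each time $t$, define the extreme set $X_M(t,\eps)=\brc{i \in \V\setminus\A : x_i[t] > \Mb^*-\eps}$. The crucial observation is that every non-adversarial leader sits at $C_L < \Mb^*-\eps$, so $X_M(t,\eps)$ contains no leader and is therefore a subset of $\V\setminus\Le$; it is also nonempty, since $\Mb[t]\geq\Mb^*$ forces the maximizing agent to be a normal agent (not $x_r$). Applying strong $(2F+1)$-robustness with respect to $\Le$ to this set yields a node $v\in X_M(t,\eps)$ with at least $2F+1$ in-neighbors outside $X_M(t,\eps)$. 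Under the $F$-local model at most $F$ of these are adversarial, so $v$ has at least $F+1$ non-adversarial in-neighbors outside the extreme set, each with value $\leq \Mb^*-\eps < x_v[t]$. Since the W-MSR filter discards at most the $F$ lowest of the values below $x_v[t]$, a pigeonhole argument guarantees at least one of these $F+1$ low-valued neighbors survives in $\mathcal{R}_v[t]$; together with the weight bound $w_{vj}\geq\alpha$ and the upper bound $x_j[t]\leq\Mb[t]$ from Lemma~\ref{lem:mbarmono}, the update obeys $x_v[t+1]\leq \alpha(\Mb^*-\eps)+(1-\alpha)\Mb[t]$, a strict drop away from the top.

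To finish, I would turn this one-step drop into the emptying of the extreme set. The idea is to pick a slowly decreasing sequence $\eps_0 > \eps_1 > \cdots > 0$ (with $\eps_0=\eps$ and each $\eps_{k+1}$ chosen slightly below $\alpha\eps_k$) and a sufficiently late start time after which $\Mb[t]$ lies within a small margin of $\Mb^*$. For such $t$ the same bound shows two things at once: any non-adversarial node already at or below $\Mb^*-\eps_k$ stays at or below $\Mb^*-\eps_{k+1}$ at the next step, and the reachable node $v$ supplied above likewise falls to at most $\Mb^*-\eps_{k+1}$. Hence no node enters the extreme set while at least one leaves it, so $|X_M(t+1,\eps_{k+1})|\leq |X_M(t,\eps_k)|-1$. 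Because $|X_M|\leq n$, after at most $n$ steps the extreme set is empty at a threshold $\Mb^*-\eps_n$ with $\eps_n>0$, giving $\Mb[\,\cdot\,]\leq \Mb^*-\eps_n < \Mb^*$ and contradicting $\Mb[t]\geq\Mb^*$. Thus $\Mb^*=C_L$; the symmetric argument gives $\mb^*=C_L$, and the sandwich completes the proof.

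I expect the main obstacle to be the bookkeeping in the last step: the sequence $\eps_k$ and the margin on $\Mb[t]$ must be chosen jointly so that the $(1-\alpha)\Mb[t]$ term cannot lift a node back into the extreme set, while still keeping $\eps_n$ strictly positive after the finitely many contraction steps. Making these constants mutually consistent, rather than the robustness argument itself, is where the care is needed.
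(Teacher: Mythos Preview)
Your proposal is correct and follows essentially the same strategy as the paper's proof: monotone limits from Lemma~\ref{lem:mbarmono}, a contradiction argument on the extreme set, strong $(2F+1)$-robustness to peel off at least one node per step, and a geometrically contracting $\eps$-sequence that stays positive through finitely many steps. The paper handles the upper and lower extreme sets simultaneously (via their union $S_X$) rather than by symmetry, and it resolves the bookkeeping you flagged by fixing the margin $\eps < \frac{\alpha^N}{1-\alpha^N}\eps_0$ with $N=|\N|$, so that the recursion $\eps_j=\alpha\eps_{j-1}-(1-\alpha)\eps$ remains strictly positive after $N$ steps.
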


\begin{proof}
We prove by contradiction. By Lemma \ref{lem:mbarmono}, we know that $\overline{m}[t]$ and $\overline{M}[t]$ are monotonic functions for all $t \geq t_C$ and therefore have limits $A_{\mb}$ and $A_{\Mb}$, respectively. The agents reach consensus to $x_r[t]$ if $A_\mb = A_\Mb = C_L$. 

We first consider the case when $\Mb[t_C] > C_L$ and $\mb[t_C] < C_L$. Let $\eps_{m0} = C_L - A_\mb$ and $\eps_{M0} = A_\Mb - C_L$. Let $\eps_0 = \min(\eps_{m0},\eps_{M0})$. This implies that $A_\mb + \eps_0 \leq C_L \leq A_\Mb - \eps_0$. Define the functions $X_m(t,\epsilon_i) = \{i \in \N: x_i[t] < A_\mb + \epsilon_i\}$ and $X_M(t,\epsilon_i) = \{i \in \N: x_i[t] > A_\Mb - \epsilon_i\}$. Next, define the set $S_X(t,{\eps_i}) = X_m(t,\eps_i) \cup X_M(t,\eps_i)$. Also define $\epsilon < \frac{\alpha^N}{1-\alpha^N}\epsilon_0$ where $N = |\N|$. Note that $\epsilon_0 > \epsilon > 0$. By the definition of convergence, there exists a time $t_\epsilon$ such that $\mb[t_\eps] > A_\mb - \eps$ and $\Mb[t_\eps] < A_\Mb + \eps$. Consider the set $S_{N_0} = \N$.  By definition of strong $(2F+1)$-robustness, there exists at least one agent $i \in S_{N_0}$ that is connected to at least $F+1$ leader agents. This is because at least one agent in $S_{N_0}$ must have $2F+1$ in-neighbors outside of $S_{N_0}$. Since an $F$-local adversary model implies $|\J_i \cap \A| \leq F$, $i$ must be connected to at least $F+1$ leaders. From Lemma \ref{lem:constL} we know that $i$ will use at least one leader's state to update its own state for all time. 

By Lemma \ref{lem:mbarmono}, agent $i$ will update its state with values on the interval $[\mb[t_\eps,\Mb[t_\eps]]$. To demonstrate the smallest and largest values that $i$'s state can possibly have at time $t_\eps + 1$, we assume that the minimum possible weight $\alpha$ is placed on the leader's state value, and the maximum possible weight is placed on $\mb[t_\eps]$ or $\Mb[t_\eps]$, the smallest and largest values possible for agent $i$'s in-neighbors. We have
\begin{align*}
x_i[t_\eps +1] &\geq (1-\alpha)\mb[t_\eps] + \alpha C_L\\
&\geq (1-\alpha)(A_\mb - \eps) + \alpha(A_\mb + \eps_0) \\
&\geq A_\mb + \alpha \eps_0 - \eps (1-\alpha) \\
x_i[t_\eps +1] &\leq (1-\alpha)\Mb[t_\eps] + \alpha C_L\\
&\leq (1-\alpha)(A_\Mb + \eps) + \alpha(A_\Mb - \eps_0) \\
&\leq A_\Mb - \alpha \eps_0 + \eps (1-\alpha)
\end{align*}
Define $\eps_1 = \alpha \eps_0 - \eps(1-\alpha)$. It can be shown that $\eps_0 > \eps_1 > \eps >0$. The above bounds also apply to any other normal agents connected to at least $F+1$ leader agents, and therefore each normal agent $i$ connected to at least $F+1$ leaders satisfies $A_\mb + \eps_1 \leq x_i[t_\eps+1] \leq A_\Mb - \eps_1$. Since these agents use their own state when performing state updates, and by monotonicity of $\mb[t]$ and $\Mb[t]$, these bounds also apply $\forall t \geq t_\eps +1$. Now, observe that $|S_X(t_\eps+1,\eps_1)| \leq |S_X(t_\eps,\eps)|$. Consider the new set $S_{N_1} = S_X(t_\eps+1,\eps_1)$. Observe that all normal agents connected to at least $F+1$ leaders are not in $S_{N_1}$. By the conditions of $(2F+1)$-robustness, there exists at least one agent $i_1$ that will not filter out at least one in-neighbor with state value within the bounds $[A_\mb + \eps_1, A_\Mb - \eps_1]$. Therefore the lowest and highest bounds for $x_{i_1}[t_\eps + 2]$ are
\begin{align*}
x_{i_1}[t_\eps+2] &\geq (1-\alpha)\mb[t_\eps]+\alpha(A_\mb+\eps_1) \\
&\geq A_\mb + \alpha \eps_1 - \eps (1-\alpha) \\
x_{i_1}[t_\eps +2] &\leq (1-\alpha)\Mb[t_\eps] + \alpha (A_\Mb - \eps_1)\\
&\leq A_\Mb - \alpha \eps_1 + \eps (1-\alpha)
\end{align*}
These bounds also apply for all normal agents not in $S_{N_0}$ since each normal agent uses its own state value when performing its state update. Defining the new value $\eps_2 = \alpha \eps_1 - \eps (1-\alpha)$, we then have $i_1 \notin S_X(t_\eps + 2,\eps_2)$. This implies  ${|S_X(t_\eps+2,\eps_2)|} < |S_X(t_\eps+1,\eps_1)|$. We now define $\eps_j$ recursively for $j \geq 1,\ j \in \Z$ as $\eps_j = \alpha \eps_{j-1} - (1-\alpha)\eps$. It can be shown that $\eps_j < \eps_{j-1}\ \forall j \geq 1$. If at each timestep $t_\eps + j$ we define $S_{N_j} = S_X(t_\eps+j,\eps_j)$, there will exist a agent $i_j \in S_{N_j}$ with at least $F+1$ in-neighbors in the interval $[A_\mb+\eps_j, A_\Mb-\eps_j]$. This implies that $i_j$ has an in-neighbor $k \in S_L: A_\mb + \eps_j \leq x_k[t_\eps + j] \leq A_\Mb + \eps_j$ that will not be filtered out when updating its state for time step $t_\eps + j+1$. Therefore agent $i_j$ will not be in $S_{N_{j+1}}$ for time step $t_\eps+j+1$ (as well as all normal agents which were not in $S_{N_j}$.

Continuing the above analysis for each time step $t_\eps+j$, we see that $|S_X(t_\eps+j,\eps_j)| < |S_X(t_\eps+j-1,\eps_{j-1})|$. Since $|S_X(t_\eps,\eps_0)| \leq |\N|$, there exists a time step $t_\eps + T,\ T \leq |\N|$ where $|S_X(t_\eps+T,\eps_T)| = 0$. This implies that $\mb[t_\eps+T] \geq A_\mb + \eps_T$ and $\Mb[t_\eps+T] \leq A_\Mb - \eps_T$. We demonstrate that $\eps_T > 0$, which contradicts $A_\mb$ and $A_\Mb$ being the lower and upper limits for $\mb[t]$ and $\Mb[t]$. Let $N = |\N|$. The value of $\eps_T$ satisfies
\begin{align*}
\eps_T &= \alpha \eps_{T-1} - (1-\alpha)\eps \\
&= \alpha^2 \eps_{T-2} - \alpha(1-\alpha)\eps - (1-\alpha)\eps \\
& \vdots \\
&= \alpha^T \eps_0 - (1-\alpha)(1+\alpha+\ldots+\alpha^{T-1})\eps\\
&= \alpha^T \eps_0 -(1-\alpha^T)\eps \\
&\geq \alpha^N \eps_0 - (1-\alpha^N)\eps
\end{align*}
By definition of $\eps$, $(1-\alpha^N)\eps < \alpha^N \eps_0$, which implies that $\eps_T > 0$. This contradicts our assumptions that $A_\mb < C_L$ and $A_\Mb > C_L$, proving that $A_\mb = A_\Mb = C_L$ and $\lim_{t \rightarrow \infty} \|x_i[t] - x_r[t]\| = 0\ \forall i \in \N$.

Finally, we consider the case when either $\Mb[t] = C_L$ or $\mb[t] = C_L$ for some $t \geq t_T$. Without loss of generality, consider the case where $\Mb[t] = C_L$. By Lemma \ref{lem:mbarmono} we know that all normal agents' state values are bounded by $[\mb[t_0],C_L]$ and that $\mb[t]$ is nondecreasing. Therefore m[t] has an upper limit $A_\mb$. Suppose $A_\mb < C_L$. Let $\epsilon_0 = C_L - A_\mb$. A similar strategy can be used to demonstrate that there exists a time where all agents have a value greater than $A_\mb$, contradicting that $A_\mb$ is an upper limit to $\mb[t]$ and proving that $\lim_{t \rightarrow \infty} \|x_i[t] - x_r[t]\| = 0\ \forall i \in \N$.

% As a clarification, when there exist Byzantine nodes in $\A$ these nodes potentially send different state values to each of their out-neighbors. Therefore for each Byzantine $j \in \A$, the state value as perceived by other nodes cannot be classified as being exclusively in $S_X(t,e_i)$, or $\V \backslash S_X(t,e_i)$. However, the theorem still holds since $|\A|$ is an $F$-local set and $S_{N_j}$ is $(2F+1)$-reachable for all $j$. This ensures that the Byzantine agents cannot stop the flow of information from $\Le$ to the rest of the network. For each $i \in \N$, 

% convergence can still be proven by observing that for each $i \in \N$, $|\J_i[t]\cap \A| \leq F$ and therefore $i$ will receive no more than $F$ malicious state values at $t$. 

\end{proof}

%
%
%
%*********************************************8
% Theorem
%***********************************************
%
%
%

\subsection{Consensus Tracking With Trusted Agents}

We point out that for a graph to be strongly $(2F+1)$-robust with respect to the set $\Le$, $|\Le| \geq 2F+1$. If we incorporate the notion of \emph{trusted agents} into the network, this lower bound on the number of leaders can be slightly relaxed. Trusted agents as defined in \cite{Abbas2014} are agents who have been made sufficiently secure to safely assume that they cannot be compromised and will not misbehave. We define \emph{trusted leader-follower robustness} as follows:

\begin{define}
\label{def:TNLF}
A network is \emph{trusted leader-follower robust} (TLF robust) with parameter $F \in \Z_{\geq 0}$ if for a set $S \subset \V$ and for all nonempty sets $C \subseteq \V \backslash S$, at least one of the following holds:
\begin{itemize}
\item There exists $i \in C$ with at least $F+1$ in-neighbors from $S$; i.e. $|\V_i \cap \mathcal{S}| \geq F+1$
\item $C$ is $2F+1$-reachable
\end{itemize}
\end{define}

Under TLF robustness, the minimum number of leaders required is only $F+1$. The following Lemma and Theorem extend the results obtained for $(2F+1)$-robust networks to TLF-robust networks:

% \begin{lemma}
% \label{lem:LFmono}
% Consider a graph $G = (\V,\E)$ which satisfies an $F$-total adversary model (with $|\Le \cap \A| \geq 0$). Suppose $G$ is LF robust with parameter $F$, and the normal agents apply the W-MSR algorithm. Then on any time interval $[t_1,t_2]$ where $x_r[t] \geq M[t]\ \forall t \in [t_1,t_2]$, ${x_i[t+1]} \in [m[t],x_r[t]]\ \forall i \in \N$ and $m[t]$ is a nondecreasing function $\forall t \in [t_1,t_2]$. Similarly, on any time interval $[t_1,t_2]$ where $x_r[t] \leq m[t]\ \forall t \in [t_1,t_2]$, ${x_i[t+1]} \in [x_r[t],M[t]]\ \forall i \in \N$ and $M[t]$ is a nonincreasing function $\forall t \in [t_1,t_2]$.
% \end{lemma}

% \begin{proof}
% First consider when $x_r[t] \geq M[t]$ on a time interval $[t_1,t_2]$. If any of the $F$ misbehaving nodes' states are less than $m[t]$, their state values will be filtered out by all normal nodes in the network. The same holds for if any of their values are greater than $x_r[t]$. By the definition of the W-MSR algorithm and LF robustness, all normal nodes will update their states with a convex combination of values in $[m[t],x_r[t]]$. A similar argument holds when $x_r[t] \leq m[t]$ on a time interval $[t_1,t_2]$.
% \end{proof}

\begin{lemma}
\label{lem:LFmbarmono}
Consider a graph $G = (\V,\E)$ which satisfies an $F$-local adversary model where $|\Le \cap \A| = 0$. Suppose $G$ is TLF robust with parameter $F$, and the normal agents apply the W-MSR algorithm. Define $\overline{m}[t] = \min_{i \in \N}(x_i[t],x_r[t])$ and $\overline{M}[t] = \max_{i \in \N} (x_i[t],x_r[t])$. Then on any time interval $[t_1,t_2)$ where $x_r[t]$ is constant, the following hold $\forall t \in [t_1,t_2)$:
\begin{itemize}
\item $x_i[t] \in [\mb[t_1],\Mb[t_1]]\ \forall i \in \V \backslash \A$
\item $\mb[t]$ and $\Mb[t]$ are nondecreasing and nonincreasing, respectively
\end{itemize}
\end{lemma}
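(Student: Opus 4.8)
The plan is to recognize that the invariance-and-monotonicity conclusion of Lemma~\ref{lem:mbarmono} was established \emph{without} ever invoking strong $(2F+1)$-robustness: its proof used only the $F$-local adversary model and the outlier-filtering structure of the W-MSR update. Since TLF robustness and strong $(2F+1)$-robustness share the same $F$-local threat model and the same W-MSR update for normal agents, I expect the argument to transfer almost verbatim, the only new bookkeeping being the explicit verification that the leaders lie in $\V \backslash \A$.

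First I would fix an arbitrary normal agent $i \in \N$ and an arbitrary time $t \in [t_1,t_2)$. By the $F$-local model, $i$ receives at most $F$ adversarial values from $\J_i \cap \A$. Any such value exceeding $\Mb[t]$ strictly exceeds every non-adversarial value in $\J_i$, so it occupies one of the top positions of $i$'s sorted list and is removed; symmetrically, any adversarial value below $\mb[t]$ is removed. Hence $\mathcal{R}_i[t] \subseteq [\mb[t],\Mb[t]]$, and the convex-combination update keeps $x_i[t+1]$ inside $[\mb[t],\Mb[t]]$.

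Next I would treat the leaders. Because the hypothesis $|\Le \cap \A| = 0$ places every leader in $\V \backslash \A$, and each such agent satisfies $x_i[t] = x_r[t]$, its value is one of the arguments of the min and max defining $\mb[t]$ and $\Mb[t]$, so $x_i[t] \in [\mb[t],\Mb[t]]$ trivially; and since $x_r[t]$ is constant on $[t_1,t_2)$, the leader value does not move. Combining both cases, every $i \in \V \backslash \A$ stays in $[\mb[t],\Mb[t]]$. With $x_r[t]$ constant, the minimum of the normal states and $x_r$ can only rise while the corresponding maximum can only fall, giving $\mb[t+1] \geq \mb[t]$ and $\Mb[t+1] \leq \Mb[t]$; iterating from $t_1$ yields $x_i[t] \in [\mb[t_1],\Mb[t_1]]$ for all $t \in [t_1,t_2)$.

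The point requiring the most care --- rather than a genuine obstacle --- is precisely the leader bookkeeping: it is the assumption $|\Le \cap \A| = 0$ (stronger here than the $|\Le \cap \A| \geq 0$ allowed in Lemma~\ref{lem:mbarmono}) that certifies the leaders' constant reference value is legitimately bounded by $\mb$ and $\Mb$. Since TLF robustness is a global connectivity property needed only for the convergence theorem to follow, and plays no role in a pure safety statement, I expect the completed proof to read essentially as a restatement of Lemma~\ref{lem:mbarmono} with this one step made explicit.
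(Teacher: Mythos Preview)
Your proposal is correct and matches the paper's approach exactly: the paper's own proof of Lemma~\ref{lem:LFmbarmono} consists entirely of the single sentence ``The result can be shown by a method similar to the proof of Lemma~\ref{lem:mbarmono},'' and your write-up is precisely that similar method, with the leader bookkeeping made explicit. Your observation that neither strong $(2F+1)$-robustness nor TLF robustness is actually invoked in the safety argument is accurate and is the reason the transfer is immediate.
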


\begin{proof}
The result can be shown by a method similar to the proof of Lemma \ref{lem:mbarmono}. 
\end{proof}

\begin{theorem}
Consider a network of $n$ agents with trusted leaders $(|\Le \cap \A| = 0)$ operating under the W-MSR algorithm that satisfies an $F$-local adversarial condition. Suppose further that there exists a time $t_C$ and a constant $C_L \in \R$ such that $x_r[t]=C_L\ \forall t \geq t_C$. Then $\lim_{t \rightarrow \infty} \|x_i[t] - x_r[t]\| = 0\ \forall i \in \N$ if the network is TLF robust with parameter $F$.
\end{theorem}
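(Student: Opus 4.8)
The plan is to mirror the contradiction argument used in the proof of Theorem \ref{constLeaderInConv}, substituting the TLF robustness property (Definition \ref{def:TNLF}) for strong $(2F+1)$-robustness at the single place where the graph condition is invoked. First I would apply Lemma \ref{lem:LFmbarmono} to conclude that on the interval $[t_C,\infty)$ the quantities $\mb[t]$ and $\Mb[t]$ are monotone and bounded, hence converge to limits $A_\mb$ and $A_\Mb$ with $A_\mb \leq C_L \leq A_\Mb$; consensus to $C_L$ is then equivalent to $A_\mb = A_\Mb = C_L$. Assuming for contradiction that $\mb[t_C] < C_L$ and $\Mb[t_C] > C_L$, I would reuse the construction of $\eps_0$, the strictly smaller $\eps$, the sets $X_m$, $X_M$, $S_X$, and the recursively defined sequence $\eps_j = \alpha\eps_{j-1} - (1-\alpha)\eps$, together with the shrinking family $S_{N_j} = S_X(t_\eps+j,\eps_j)$.

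The only step requiring modification is the claim that, at each iteration $j$ with $S_{N_j} \neq \emptyset$, some agent in $S_{N_j}$ updates using an in-neighbor whose value lies in the interior interval $[A_\mb+\eps_j,\,A_\Mb-\eps_j]$ and therefore exits the set at the next step. Since $S_{N_j} \subseteq \N \subseteq \V \backslash \Le$ is a nonempty subset of the complement of the trusted-leader set, TLF robustness guarantees one of two alternatives. In the second alternative, where $S_{N_j}$ is $(2F+1)$-reachable, the argument is identical to Theorem \ref{constLeaderInConv}: the responsible agent has at least $2F+1$ in-neighbors outside $S_{N_j}$, at most $F$ of which are adversarial, leaving at least $F+1$ non-adversarial in-neighbors (leaders at $C_L$ or normal agents outside $S_{N_j}$) whose states lie in the interior interval, so at least one survives W-MSR filtering. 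In the first alternative I would invoke Lemma \ref{lem:constL}: the responsible agent has at least $F+1$ in-neighbors in $\Le$, and because $|\Le \cap \A| = 0$ every one of these is a genuine leader holding value $C_L$, so at least one leader state is retained and used in the update.

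The main obstacle, and the reason the bound on leaders relaxes from $2F+1$ to $F+1$, is precisely the handling of this first alternative, where the trusted-leader hypothesis $|\Le \cap \A| = 0$ is indispensable. With only $F+1$ in-neighbors from $\Le$, a single compromised leader could in principle be filtered together with up to $F-1$ others and leave no surviving leader value; the disjointness of $\Le$ and $\A$ rules this out and lets Lemma \ref{lem:constL} apply. Once either alternative is established, the induction proceeds exactly as before: $|S_{N_j}|$ strictly decreases, so after $T \leq |\N|$ steps $S_{N_T}$ is empty, forcing $\mb[t_\eps+T] \geq A_\mb + \eps_T$ and $\Mb[t_\eps+T] \leq A_\Mb - \eps_T$ with $\eps_T \geq \alpha^{|\N|}\eps_0 - (1-\alpha^{|\N|})\eps > 0$, contradicting the definitions of $A_\mb$ and $A_\Mb$. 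Finally I would dispatch the boundary cases $\Mb[t] = C_L$ or $\mb[t] = C_L$ by the same one-sided version of the argument used at the end of Theorem \ref{constLeaderInConv}.
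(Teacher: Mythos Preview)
Your proposal is correct and is precisely the argument the paper intends: the paper's own proof consists of the single sentence ``The result can be shown by using a method similar to Theorem \ref{constLeaderInConv},'' and your write-up fills in exactly that similarity, correctly isolating the one point of divergence (the two alternatives in Definition \ref{def:TNLF}) and using $|\Le \cap \A|=0$ together with Lemma \ref{lem:constL} to handle the first alternative.
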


\begin{proof}
The result can be shown by using a method similar to Theorem \ref{constLeaderInConv}.

\end{proof}

\section{Application to Circulant graphs}

Circulant graphs are useful in the context of robust networks since their $r$-robustness is easily determined. Since $2k$-connected ring graphs are at least $\floor{\frac{k}{2}}$-robust (\cite{Zhang2015a,Salda2017}), an undirected circulant graph of the form $C_n\{\pm 1, \pm 2, \ldots, \pm k\}$ is $2k$ connected and therefore $k$-robust. In addition, our recent work in \cite{Usevitch2017} demonstrates that circulant digraphs of the form $C_n\{1,2,\ldots,k\}$ are at least $\ceil{\frac{k}{2}}$-robust.

We now show that these graphs can demonstrate strong $(2F+1)$-robustness and TLF robustness with parameter $F$ by selecting a proper subset of agents to behave as leaders. As per the definition of circulant graphs, we assume all agents are indexed $1,\ldots,n$. In our next proof we refer to sets of consecutive agents by index. An example is $P_L = \{2,3,4,5,\ldots,9\}$ in a network of $n=15$ agents. Since the index numbers are defined on the set of integers modulo $n$, the set $P_L = \{14, 15,1,2\}$ would also be a set of consecutive agents in a network of $n=15$ agents.

\begin{theorem}
\label{thm:kcirc}
A $k$-circulant digraph $\D = C_n\{1,2,\ldots,k\}$ is strongly $(2F+1)$-robust with respect to $\Le$ if $\D$ contains a set of consecutive agents by index $P_L$ such that $|P_L| \leq k$ and $|P_L \cap \Le| \geq 2F+1$. Moreover, $\D$ satisfies the conditions for TLF robustness with parameter $F$ if $|P_L| \leq k - F$ and $|P_L \cap \Le| \geq F+1$.
\end{theorem}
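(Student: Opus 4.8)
The plan is to exploit the cyclic structure of $\D = C_n\{1,2,\ldots,k\}$, in which the in-neighbor set of any agent $j$ is exactly the block of $k$ predecessors $\V_j = \{j-1,j-2,\ldots,j-k\} \pmod n$. Since every rotation $i \mapsto i+c \pmod n$ is an automorphism of $\D$ that carries consecutive blocks to consecutive blocks, I would first assume without loss of generality that $P_L = \{1,2,\ldots,p\}$ with $p = |P_L|$, and write $L = |P_L \cap \Le|$. For the first claim I must show every nonempty $C \subseteq \V \backslash \Le$ is $(2F+1)$-reachable; for the second I must show every such $C$ either contains an agent with $F+1$ in-neighbors in $\Le$ or is $(2F+1)$-reachable. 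In both cases the device is the same: scan the agents in the cyclic order $(p+1,p+2,\ldots,n,1,\ldots,p)$ that begins immediately after $P_L$, and let $i^*$ be the first agent of $C$ met in this scan. By construction every agent strictly preceding $i^*$ in this order lies outside $C$.

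The next step is to locate $\V_{i^*}$, the backward arc of length $k$ ending at $i^*-1$, relative to $P_L$ and to the ``gap'' $\{p+1,\ldots,i^*-1\}$ swept before reaching $i^*$. The gap is entirely outside $C$, and the leaders of $P_L$ are outside $C$ by hypothesis, so the only in-neighbors whose membership in $C$ is uncertain are the non-leader indices of $P_L$. I would organize the argument around where the arc falls: (i) if it lies wholly inside the gap (this needs $i^* \ge p+k+1$) then all $k$ in-neighbors are outside $C$; (ii) if it covers all of $P_L$ (this happens exactly when $p+1 \le i^* \le k+1$) then all $L$ leaders are in-neighbors of $i^*$; (iii) otherwise the arc covers only the back portion $\{i^*-k,\ldots,p\}$ of $P_L$, missing a front of size $i^*-k-1$. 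In case (iii) the count of out-of-$C$ in-neighbors is at least the gap size $(i^*-1-p)$ plus the leaders surviving in the covered portion, which is at least $L-(i^*-k-1)$; a short computation shows the total is at least $L+k-p$ regardless of whether this leader term is truncated at zero.

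With the master inequality ``out-of-$C$ in-neighbors $\ge L+k-p$'' in hand, both conclusions fall out by arithmetic. For strong $(2F+1)$-robustness, the hypotheses give $L \ge 2F+1$ and $p \le k$, so in case (iii) the count is $\ge (2F+1)+(k-p) \ge 2F+1$; cases (i) and (ii) give $k$ and $L$ out-of-$C$ in-neighbors respectively, and $k \ge p \ge L \ge 2F+1$, so $C$ is $(2F+1)$-reachable in every case. For TLF robustness, $L \ge F+1$ and $p \le k-F$ force $k \ge p+F \ge 2F+1$ and $L+k-p \ge (F+1)+F = 2F+1$; hence in cases (i) and (iii) $C$ is again $(2F+1)$-reachable, while in case (ii) the arc covers $P_L$ so $i^*$ has at least $L \ge F+1$ in-neighbors in $\Le$ and the first TLF alternative holds directly. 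The relaxation from $2F+1$ to $F+1$ leaders is precisely what case (ii) needs, which is why the bound $p \le k-F$ (rather than $p \le k$) appears.

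I expect the main obstacle to be the careful bookkeeping of the modular wraparound, together with the degenerate situation $C \subseteq P_L$ (when the scan finds $i^*$ inside $P_L$ and $\V_{i^*}$ wraps past index $n$); here one must check that the wrapped predecessors are high indices lying outside $C$, which holds once $n$ is large enough relative to $k$ and is otherwise handled directly. The subtle point throughout is that non-leader members of $P_L$ may themselves belong to $C$, so only leaders may be credited toward the reachability count; arranging the case (iii) estimate so that it depends solely on $L$, $k$, and $p$ — and then collapsing it to $L+k-p$ — is the crux that makes the two size conditions $p \le k$ and $p \le k-F$ do exactly the right work.
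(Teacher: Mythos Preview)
Your proposal is correct and takes essentially the same approach as the paper: both relabel so that $P_L$ is a fixed contiguous block, scan cyclically starting immediately after $P_L$ to find the first agent $i^*$ of $C$, and then count the in-neighbors of $i^*$ lying outside $C$. The only difference is organizational---the paper propagates the base count through the whole cycle via the one-step shift monotonicity $|\V_j \cap (\V\backslash C)| \ge |\V_{j-1} \cap (\V\backslash C)|$ (valid whenever $j-1\notin C$), which absorbs your degenerate case $i^*\in P_L$ without a separate check, whereas you split explicitly into positional cases and extract the quantitative bound $L+k-p$; both executions arrive at the same count.
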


\begin{proof}
Since robustness analysis occurs prior to knowing which agents will become adversarial, all agents will be treated as either normal or leaders. Suppose $k \geq |P_L|$ and $|P_L \cap \Le| \geq 2F+1$. Let the first agent in $P_L$ be labeled as agent $(n-|P_L|+1)$ and the last agent in $P_L$ as agent $n$. We must show that all nonempty $C \subseteq \V \backslash \Le$ are $(2F+1)$-reachable. If agent $1 \in C$ then $C$ is $(2F+1)$-reachable since $ \{(n-|P_L|+1),\ldots,n \} \subseteq \V_{1}$ which implies $|\V_{1} \cap (\V \backslash C)| \geq 2F+1$. Next, suppose that agent $1 \notin C$ and $2 \in C$. Since $\{(n-|P_L|+2),\ldots,1 \}\subseteq \V_{2}$, this implies that $|\V_{2} \cap (\V \backslash C)| \geq |\V_{1} \cap (\V \backslash C)| \geq 2F+1$ and therefore $C$ is $2F+1$-reachable. This reasoning can be continued inductively by assuming $\{1,\ldots p-1\} \notin C$, $p \in C$, and observing that $|\V_{p} \cap (\V \backslash C)| \geq |\V_{p-1} \cap (\V \backslash C)|$. Note that if $p$ is ever the number of an agent in $\Le$, then we need not consider it ever being in $C$ and the analysis can be continued with the next normal agent. This reasoning can be continued through the entire network to show that all nonempty $C$ are $(2F+1)$-reachable.

To prove the result for TLF robustness with parameter $F$, assume that agents are labeled in the same way as before. Note that $\{(n-|P_L|+1),\ldots,n\} \subseteq \V_j,\ \forall j \in \{1,\ldots,F\}$. Therefore agents 1 through $F$ each have at least $F+1$ leaders as in-neighbors, i.e. $|\V_j \cap \Le| \geq F+1\ \forall j \in \{1,\ldots,F\}$. Therefore, if any of these agents are in $C$, the network satisfies the first condition of TLF robustness with parameter $F$. Now, assume that $\{1,\ldots,F\} \notin C$ and $F+1 \in C$. Note that $\{(n-|P_L|+1),\ldots,n,1,\ldots,F\} \subseteq \V_{F+1}$, and therefore $|\V_{F+1} \cap (\V \backslash C)| \geq 2F+1$, satisfying the second condition of TLF robustness. A similar inductive method as the one used to prove $(2F+1)$-robustness can then be used to show that the network is TLF robust with parameter $F$.
\end{proof}

As an example, suppose a $k$-circulant digraph has $n=10$ agents, $k = 7$, and $F = 2$ with agents $\{1,4,5\}$ being leaders. The set of consecutive agents $P_L = \{1,2,3,4,5\}$ satisfies $|P_L| \leq k - F$ and $|P_L \cap \Le| \geq F+1$. Therefore the digraph is TLF robust with parameter $F=2$.

\begin{theorem}
An undirected circulant graph of the form $\mathcal{G} = C_n\{\pm 1,\pm 2,\ldots,\pm k\}$ is strongly $(2F+1)$-robust with respect to $\Le$ if $\mathcal{G}$ contains a set of consecutive agents $P_L$ such that $|P_L| \leq k$ and $|P_L \cap \Le| \geq 2F+1$. Moreover, $\mathcal{G}$ satisfies the conditions for TLF robustness with parameter $F$ if $|P_L|\leq k-F$ and $|P_L \cap \Le| \geq F+1$.
\end{theorem}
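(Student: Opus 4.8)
The plan is to reduce this statement to Theorem~\ref{thm:kcirc} by exploiting the fact that the undirected circulant graph is a spanning supergraph of the directed $k$-circulant digraph with the same parameters. First I would observe that, recalling that an undirected graph is modeled as a digraph in which $(i,j)\in\E \iff (j,i)\in\E$, the edge set of $\mathcal{G} = C_n\{\pm1,\pm2,\ldots,\pm k\}$ contains every directed edge of $\D = C_n\{1,2,\ldots,k\}$: for each $a \in \{1,\ldots,k\}$ the pair $(i,[i+a]\bmod n)$ is present in $\mathcal{G}$ because $\pm a$ lies in its connection set. Consequently, for every vertex $p$ the in-neighbor set of $p$ in $\mathcal{G}$, namely $\{[p\pm a]\bmod n : a = 1,\ldots,k\}$, contains the in-neighbor set $\{[p-a]\bmod n : a=1,\ldots,k\}$ of $p$ in $\D$, regardless of any coincidences caused by the modular wrap-around.

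Next I would argue that both strong $(2F+1)$-robustness with respect to $\Le$ and TLF robustness with parameter $F$ are monotone under the addition of edges. The underlying notion in each case is $r$-reachability of a set $C$, which requires some $i \in C$ with $|\V_i \backslash C| \geq r$; since enlarging each $\V_i$ can only increase $|\V_i \backslash C|$, any nonempty set $C \subseteq \V\backslash\Le$ that is $r$-reachable in $\D$ remains $r$-reachable in $\mathcal{G}$, and likewise the count $|\V_i \cap \Le|$ appearing in the first TLF condition can only grow. Hence if $\D$ satisfies the defining reachability conditions over all such $C$, so does $\mathcal{G}$.

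With these two observations in place, the result would follow by invoking Theorem~\ref{thm:kcirc}: the hypotheses on $P_L$ — namely $|P_L|\leq k$ with $|P_L\cap\Le|\geq 2F+1$ for strong robustness, and $|P_L|\leq k-F$ with $|P_L\cap\Le|\geq F+1$ for TLF robustness — are exactly those under which $\D$ is shown there to be strongly $(2F+1)$-robust with respect to $\Le$, respectively TLF robust with parameter $F$. Since these conditions constrain only $P_L$ and $k$ and do not reference the edge directions, they transfer verbatim to $\mathcal{G}$, and by monotonicity the supergraph inherits both robustness properties. The main thing to get right is the verification of the in-neighbor inclusion together with the claim that both robustness notions are genuinely monotone under edge addition; once that is established, no fresh inductive reachability argument is needed, and the circulant structure enters only through Theorem~\ref{thm:kcirc}. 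I expect this monotonicity bookkeeping to be the only delicate point, and it is routine.
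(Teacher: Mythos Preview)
Your argument is correct and takes a genuinely different route from the paper. The paper's proof simply states that the same inductive labeling method used in Theorem~\ref{thm:kcirc} can be applied here, i.e., one relabels the agents so that $P_L$ occupies positions $n-|P_L|+1,\ldots,n$ and then argues, index by index, that every nonempty $C \subseteq \V\backslash\Le$ contains some agent whose immediate predecessors furnish enough in-neighbors outside $C$. You instead observe that the undirected graph $\mathcal{G}$, viewed as a digraph, is a spanning supergraph of $\D = C_n\{1,\ldots,k\}$, and that both strong $(2F+1)$-robustness with respect to $\Le$ and TLF robustness with parameter $F$ are monotone under edge addition (since $|\V_i\backslash C|$ and $|\V_i\cap\Le|$ can only grow). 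This lets you invoke Theorem~\ref{thm:kcirc} directly rather than re-running its induction. Your route is cleaner and more general: the monotonicity lemma you isolate applies to any supergraph of a strongly robust or TLF robust graph, not just the undirected circulant, and it makes explicit why no new structural work is needed. The paper's approach, by contrast, keeps the argument self-contained and concrete but at the cost of repeating essentially the same reachability bookkeeping.
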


\begin{proof}
The same method in Theorem \ref{thm:kcirc} can be applied to prove the result.
\end{proof}

Because $k$-circulant graphs demonstrate properties of $r$-robustness, strong $r$-robustness, and TLF robustness, they can be used both for situations requiring consensus tracking and for situations requiring only resilient asymptotic consensus under the W-MSR algorithm. We point out that the properties of strong $(2F+1)$-robustness and TLF robustness are a two-edged sword: if the $F$-local adversary assumption is violated in a strongly $(2F+1)$ or TLF robust network, a properly chosen set of misbehaving agents could potentially influence the network in a similar manner to a set of leaders.

\section{Simulation}

For our simulations, we consider agents in $k$-circulant digraphs. We assume the agents are trying to come to agreement on a state variable of interest such as altitude, the radius of a circular patrolling path (\cite{Salda2017}), minimum inter-agent separation distance, etc. The implementation of the W-MSR algorithm among all agents serves as a distributed consensus manager to allow agents to reach agreement on the variable of interest. Once agreement is reached, agents can apply local control laws based on this variable (\cite{Ren2007Inf}). We emphasize that in these simulations, agents have no knowledge as to whether their in-neighbors are normal, malicious, or behaving as leaders. In all simulations, for all agents in $(\V \backslash \Le)$ the agents' initial states are random values on the interval $[-25,25]$.

For comparison, Figure \ref{fig:Sim1} shows 20 agents running the normal W-MSR protocol in an 15-circulant digraph (which is 8-robust) with $F=3$. The 3 malicious agents are $\{1,6,15\}$, and are indicated by the dotted red lines. The normal agents come to consensus to a value within the convex hull of their initial states.

\begin{figure}
\includegraphics[width=\columnwidth]{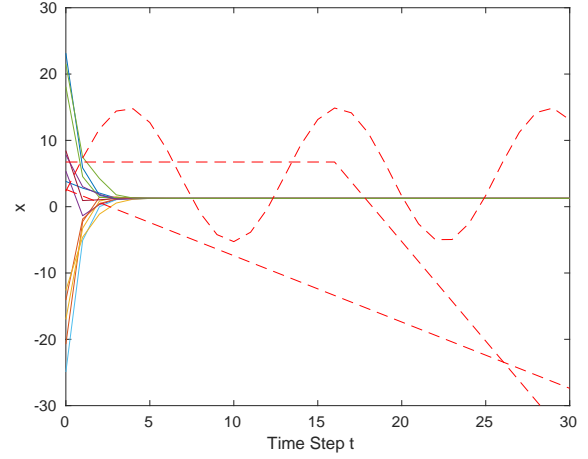}
\caption{A network of agents running the normal W-MSR algorithm with $n=20,\ k = 15$. The dotted red lines represent misbehaving agents, while the solid lines represent normally behaving agents.}
\label{fig:Sim1}
\end{figure}

The next simulation does not assume any trusted leaders ($|\Le \cap \A| \geq 0$). In this simulation (Figure \ref{fig:Sim3}) $n=30$, $k=15$, and $F=3$. The agents initially behaving as leaders are agents $\{22, 23, 24, 25, 26, 27,28\}$. If we let $P_L = \{22,\ldots,28\}$, we see that this set of consecutive agents satisfies $|P_L \cap \Le| \geq 2F+1$ and $|P_L| \leq k$, and is therefore strongly $7$-robust by Theorem \ref{thm:kcirc}. Here, leaders $\{22,26,28\}$ are attacked and begin misbehaving. However, the normal agents are still able to converge to $x_r[t]$, which is outside the interval $[-25,25]$.
%LF Robust with F=3

\begin{figure}
\includegraphics[width=\columnwidth]{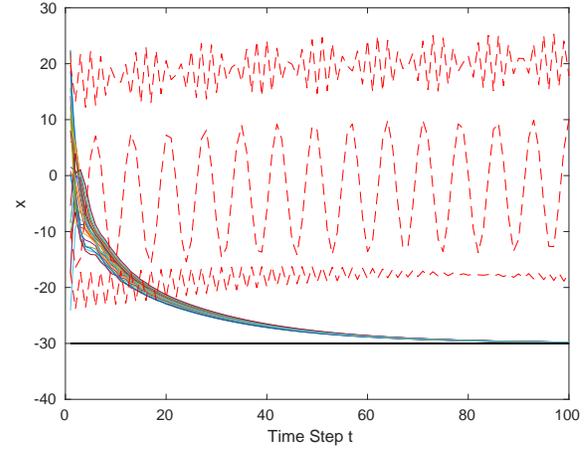}
\caption{Second simulation with $n=30$, $k=15$, and no trusted agents. The dotted red lines represent misbehaving agents, the solid black line represents $x_r[t]$, and the remaining solid lines represent normal agents.}
\label{fig:Sim3}
\end{figure}

The third simulation demonstrates a scenario where the reference value is initially changing but remains constant after a particular time. In this simulation $n=30$, $F=3$, $k=12$. The agents initially designated to behave as leaders are $\{22, 23, 24, 25, 26 ,27, 28\}$, which again indicates a strongly 7-robust graph by Theorem \ref{thm:kcirc}. Again, the leaders are not assumed to be trusted, and it is assumed agents $\{22,26,28\}$ misbehave. In this simulation, the value $x_r[t]$ switches between the values 30, -20, and 0, finally remaining at 0 for the remainder of the simulation. The normal agents asymptotically approach $x_r[t]$ on every interval where $x_r[t]$ is constant.

\begin{figure}
\includegraphics[width=\columnwidth]{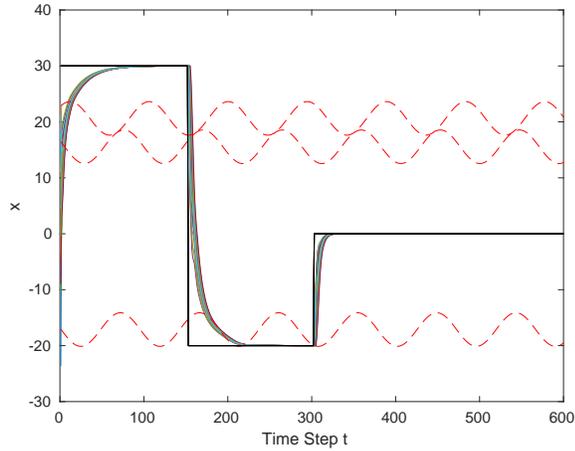}
\caption{Third simulation with $n=30$, $k=12$. Normal agents' state values converge to $x_r[t]$ on every interval where $x_r[t]$ is constant.}
\label{fig:Sim4}
\end{figure}

The fourth simulation demonstrates the same network and parameters as the third simulation, except the misbehaving agents demonstrate unbounded ramp function behavior. Despite this unbounded behavior, the normal agents still converge to $x_r[t]$.

\begin{figure}
\includegraphics[width=\columnwidth]{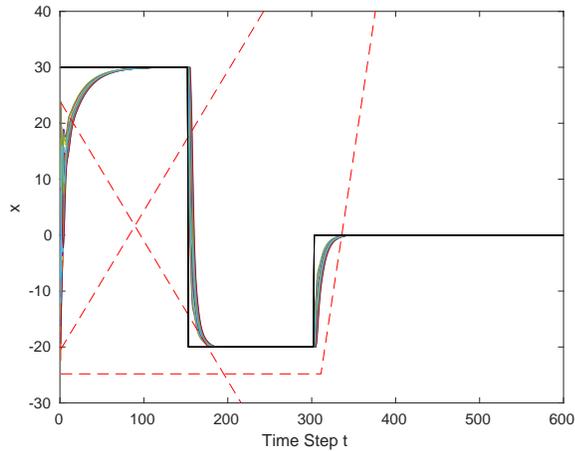}
\caption{Fourth simulation with similar parameters as the third simulation, except misbehaving agents are demonstrating unbounded ramp function behavior instead of sinusoidal behavior.}
\label{fig:Sim2}
\end{figure}

\section{Conclusion}

In this paper, we outline conditions for networks operating under the W-MSR algorithm to achieve consensus to arbitrary reference values. Our future work will involve extending these results to the case of time-varying graphs, and considering reference signals with different dynamics.

\bibliographystyle{IEEEtran}

\bibliography{Mendeley.bib}
\end{document}